\newtheorem{prop}{Proposition}
\newtheorem{coro}[prop]{Corollary}
\theoremstyle{definition}
\theoremstyle{remark}
\newtheorem{remark}{Remark}
\numberwithin{equation}{section}
\begin{document}
\newcommand{\cstar}{{C}^{\ast}}%
\newcommand{\wstar}{{W}^{\ast}}%
\newcommand{\sinc}{{\rm{sinc}}}
%
\newcommand{\R}{{\mathbb{R}}}%
\newcommand{\Z}{{\mathbb{Z}}}%
\newcommand{\CC}{{\mathbb{C}}}%
\newcommand{\NN}{{\mathbb{N}}}%
\newcommand{\nonum}{\nonumber}%
%
\newcommand{\al}{\alpha}
\newcommand{\alz}{\alpha_{z}}%
\newcommand{\ome}{\omega}%
\newcommand{\vp}{\varphi}
\newcommand{\lam}{\lambda}%
\newcommand{\Lam}{\Lambda}%
\newcommand{\cicr}{c_i^{\,\ast}}%
\newcommand{\ci}{c_i}%
\newcommand{\ctwocr}{c_2^{\,\ast}}%
\newcommand{\ctwo}{c_2}%
\newcommand{\cjcr}{c_j^{\,\ast}}%
\newcommand{\cj}{c_j}%
\newcommand{\clcr}{c_l^{\,\ast}}%
\newcommand{\cl}{c_l}%
\newcommand{\cjkcr}{c_{j+k}^{\,\ast}}%
\newcommand{\cjk}{c_{j+k}}%
\newcommand{\cijcr}{c_{i+j}^{\,\ast}}%
\newcommand{\cij}{c_{i+j}}%
\newcommand{\cjicr}{c_{j+i}^{\,\ast}}%
\newcommand{\cji}{c_{j+i}}%
\newcommand{\cjonecr}{c_{j+1}^{\,\ast}}%
\newcommand{\cjone}{c_{j+1}}%
\newcommand{\cmcr}{c_m^{\,\ast}}%
\newcommand{\cm}{c_m}%
\newcommand{\cncr}{c_n^{\,\ast}}%
\newcommand{\cn}{c_n}%
\newcommand{\hck}{\hat{c}_k}%
\newcommand{\hckcr}{\hat{c}_k^{\ast}}%
\newcommand{\hcl}{\hat{c}_l}%
\newcommand{\hclcr}{\hat{c}_l^{\ast}}%
\newcommand{\Nk}{\hat{N}_k}%

\newcommand{\vpweak}{\tilde{\vp}}

\newcommand{\I}{{\mathrm{I}}}%
\newcommand{\J}{{\mathrm{J}}}%
\newcommand{\K}{{\mathrm{K}}}%
\newcommand{\X}{{\mathrm{X}}}%
\newcommand{\Y}{{\mathrm{Y}}}%
\newcommand{\V}{{\mathrm{V}}}%
\newcommand{\Ic}{\I^{c}}%
\newcommand{\Jc}{\J^{c}}%
\newcommand{\Lamc}{\Lam^{c}}%
\newcommand{\KN}{\K_N}%
\newcommand{\Al}{\mathfrak{A}}%
\newcommand{\uniAl}{\hat{\Al}}%
\newcommand{\Als}{\Al^{S}}%
\newcommand{\Thes}{\Theta^{S}}%
\newcommand{\Fl}{\Al^F}%
\newcommand{\taus}{\tau^{S}}%
\newcommand{\tausgamma}{\tau^{S, \gamma}}%
\newcommand{\tauf}{\tau^F}%
\newcommand{\alps}{\alpha^{S}}%
\newcommand{\alpf}{\alpha^F}%
\newcommand{\Flpi}{\Fl_{[-\pi, \pi]}}%
\newcommand{\core}{\Fl_{\circ}}%
\newcommand{\cores}{\Als_{\circ}}%
\newcommand{\FlI}{\Fl(\I)}%
\newcommand{\AlsI}{\Als(\I)}%
\newcommand{\FlIc}{\Fl(\Ic)}%
\newcommand{\AlsIc}{\Als(\Ic)}%
\newcommand{\FlN}{\Fl_{[N]}}%
\newcommand{\AlsN}{\Als_{[N]}}%
\newcommand{\AlsNx}{\Als_{[N]+x}}%
\newcommand{\FlM}{\Fl_{[M]}}%
\newcommand{\AlsM}{\Als_{[M]}}%
\newcommand{\AlsidouM}{\Als_{[M]+\tint}}%
\newcommand{\idouM}{[M]+\tint}%
\newcommand{\FlLam}{\Fl(\Lam)}%
\newcommand{\AlsLam}{\Als(\Lam)}%
\newcommand{\FlLamc}{\Fl(\Lamc)}%
\newcommand{\AlsLamc}{\Als(\Lamc)}%

\newcommand{\AlJ}{\Als(\J)}%
\newcommand{\FlJ}{\Fl(\J)}%
%
\newcommand{\Ale}{\Al_{+}}%
\newcommand{\Alo}{\Al_{-}}%
\newcommand{\Fle}{\Fl_{+}}%
\newcommand{\Flo}{\Fl_{-}}%
\newcommand{\Alse}{\Als_{+}}%
\newcommand{\Also}{\Als_{-}}%
\newcommand{\Ks}{\mathcal{K}}%
\newcommand{\Ul}{{\mathcal{U}}}%
\newcommand{\Bl}{{\mathcal{B}}}%
\newcommand{\tr}{{\rm{tr}}}
\newcommand{\tint}{t_{\rm{int}}}%
\newcommand{\tdec}{t_{\rm{dec}}}%
\newcommand{\AUT}{{\rm{Aut}}}%
\newcommand{\AUTf}{\AUT(\Fl)}%
\newcommand{\AUTs}{\AUT(\Als)}%
\newcommand{\AUTeven}{\AUT(\Fle)}%
\newcommand{\id}{{\rm{id}}}%
\newcommand{\coree}{{\core}_+}%
\newcommand{\coreo}{{\core}_-}%
\newcommand{\coreeo}{{\core}_\pm}%
\newcommand{\coreoe}{{\core}_\mp}%
\newcommand{\Ap}{A_{+}}%
\newcommand{\Am}{A_{-}}%
\newcommand{\Apm}{A_{\pm}}%
\newcommand{\Bp}{B_{+}}%
\newcommand{\Bm}{B_{-}}%
\newcommand{\Bpm}{B_{\pm}}%
\newcommand{\ineps}{\in^{\varepsilon}}%
\newcommand{\inDoubeps}{\in^{2\varepsilon}}%
\newcommand{\spec}{{\rm{sp}}}%
\newcommand{\Hinf}{\tilde{H}_{\infty}}%
\newcommand{\emacs}{\texttt{emacs}}
\newcommand{\aarrow}{\stackrel{\alpha}\rightarrow}
\newcommand{\barrow}{\stackrel{\beta}\rightarrow}
\newcommand{\epsINC}{\stackrel{\varepsilon}\in}

\title{Continuous extension of the  discrete shift translations 
 	on  one-dimensional quantum   lattice systems }

\author{Hajime Moriya}
\address{Faculty of Mechanical Engineering, Institute of Science and 
Engineering, Kanazawa University, Kakuma, Kanazawa 920-1192, Japan}
\curraddr{}
\email{hmoriya4@se.kanazawa-u.ac.jp}
\author{Heide Narnhofer}
\address{Faculty of Physics, University of Vienna, 
 Boltzmanngasse 5, A-1090 Vienna, Austria}
\curraddr{}
\email{heide.narnhofer@univie.ac.at}
\subjclass[2020]{Primary 82B10, 46N55}

\keywords{$\cstar$-flows, Shift-translations,
	 discrete  versus   continuous dynamics,
	 quantum spin lattice systems versus fermion lattice systems,
  Hamiltonian versus momentum operators}

\date{}

\dedicatory{}

\begin{abstract}
We study   the continuous  extension of 
discrete shift translations on
 one-dimensional  quantum  lattice systems. 
 We explore  a specific  example   provided   by  
a  quasi-free $\cstar$-flow on 
 the  one-dimensional  fermion lattice system, 
comparing it 
 with its original   discrete shift-translations.
Specifically, we demonstrate 
  its explicit dynamical  formula, 
which reveals violations of causality and locality. 
Furthermore, we prove   that  
this  quasi-free $\cstar$-flow on 
 the  one-dimensional  fermion lattice system
 cannot be extended to 
 the one-dimensional quantum spin lattice system via the 
   Jordan-Wigner transformation.
\end{abstract}

\maketitle

\section{Introduction}
\label{sec:INT}
We  investigate the continuous  extension  
 of   discrete shift translations 
 on  the one-dimensional   
  fermion lattice system  and those  on the 
 one-dimensional quantum spin lattice system 
 to  $\cstar$-flows.  
To provide context for our investigation, we 
 review    some  relevant previous  results. 
 In   \cite{RWW},  it has  been shown  that   the  continuous 
 extension of the  shift-translation automorphism group on the 
 one-dimensional quantum spin lattice system   
cannot be achieved  by  local Hamiltonians  satisfying certain 
 locality condition.  
This non-existence statement  (Corollary 6.1  of \cite{RWW})
 is a generalization   of  the non-implementability of the shift translation  
by  finite quantum circuits   \cite{GRO}.
We also  refer to  \cite{YIN}, where  the implementation of shift translations  by Hamiltonians on finite quantum-spin chains is studied in detail.

Our ultimate objective is to demonstrate all $\cstar$-flows (strongly continuous one-parameter groups of automorphisms) that implement continuous 
 shift translations on one-dimensional quantum systems, or to disprove the 
existence of such $\cstar$-flows.
In this article, we narrow our focus to a 
 quasi-free extension of discrete shift translations 
 on the one-dimensional fermion lattice system.
Due to the  aforementioned no-go statements, this quasi-free $\cstar$-flow 
  lacks  finite   propagation speed and  
 even   the  locality  that implies 
  the Lieb-Robinson bound. 
 We  examine dynamical properties of this quasi-free $\cstar$-flow
in comparison with its original   discrete shift-translation automorphism group
  on the  fermion system. Additionally, we demonstrate that 
 the quasi-free $\cstar$-flow on the  fermion system  is not  extendable  
to the one-dimensional quantum spin lattice system; 
this constitutes a form of no-go theorem.

From the mathematical perspective, 
the pursuit of continuous extensions of a discrete automorphism group (or a single automorphism) finds its origins in Kuiper's seminal work \cite{KUI}, 
 which explores the topological structure of unitaries on 
 infinite-dimensional Hilbert spaces. 
Previous studies  \cite{GRO} \cite{RWW} mentioned before 
 have examined  the topological structure 
 of the  discrete shift translations on 
one-dimensional quantum spin systems 
 using Index theory, and  
 recent research  \cite{KATO} has  further 
developed the topological aspects shown in \cite{GRO}.  
According to  \cite{POPA}, which  generalizes  Kuiper's work to  general 
 von Neumann algebras, 
the $\sigma$-weakly continuous   extension of    
the  discrete shift translations  
 on von Neumann algebras (generated by the CAR algebra) obviously exists.

Finally, let us   mention  the work \cite{JONES}, which    
  deals with a somewhat similar 
 extension problem  for the shift translations on the quantum spin chain.
The work \cite{JONES} explores
   the  continuous limit of  shift translations on 
 the one-dimensional quantum-spin lattice 
 system   by letting 
  the spacing of the unit lattice  approach  zero, whereas 
  we always  consider   the given   quantum lattice system as fixed.
Our work and  \cite{JONES} tackle different issues.

\section{Mathematical formulation}
\label{sec:MF}
\subsection{Quantum  systems}
\label{subsec:QS}

We  introduce  the spinless fermion lattice system
 and the spin-one-half  quantum spin lattice system   on $\Z$ as 
 quasi-local $\cstar$-systems.
We refer to
 \cite{BR} for the   reference of
 $\cstar$-algebraic quantum statistical mechanics.

  For a  subset $\I\subset \Z$,
  $|\I|$ denotes  the number of sites
 in $\I$.  If $\I\subset\Z$  has a finite   $|\I|<\infty$,  
  we shall denote  $\I \Subset \Z$.
Let $\ci$ and $\cicr$ denote the annihilation operator
and the creation operator of a spinless fermion sitting at $i\in \Z$, respectively. These  obey the  canonical 
anti-commutation  relations (CARs):
\begin{align}
\label{eq:CAR}
\{ \cicr, \cjcr \}=\{ \ci, \cj \}=0,\quad \{ \cicr, \cj \}&=\delta_{i,j}\, 1, 
 \quad i, j\in \Z.
\end{align}
For each $\I\Subset\Z$,
define  the finite system  $\FlI$ 
by the  $\ast$-algebra 
 generated by $\{\cicr, \, \ci\, ;\;i\in \I\}$.
 It is isomorphic to ${\mathrm M}_{2^{|\I|}}(\CC)$, the algebra of all
 $2^{|\I|}\times 2^{|\I|}$ complex matrices.
 For  $\I \subset \J\Subset \Z$, $\FlI$ is naturally embedded into $\FlJ$ as a
subalgebra. 
Let us take 
\begin{equation}
\label{eq:CARloc}
\core:=\bigcup_{\I \Subset \Z }\FlI.
\end{equation}
By  taking  the  norm (operator norm) completion of  $\core$
we obtain  a  $\cstar$-algebra  $\Fl$   called
   the CAR algebra.
The  dense $\ast$-subalgebra $\core$ in $\Fl$ is 
called the local algebra of $\Fl$.

Let $\Theta$ denote the   automorphism on the $\cstar$-algebra
 $\Fl$  determined  by
 \begin{equation}
\label{eq:CARTheta}
\Theta(\ci)=-\ci, \quad \Theta(\cicr)=-\cicr,\quad i\in \Z.
\end{equation}
 As the automorphism $\Theta\in \AUTf$
 satisfies $\Theta\circ \Theta=\id \in \AUTf$,  
it provides  the  graded structure with  $\Fl$:
\begin{align}
\label{eq:CAREO}
 \Fle &:= \{A \in \Fl \; \bigl| \;   \Theta(A)=A  \}, \quad
 \Flo := \{A \in \Fl  \; \bigl| \;  \Theta(A)=-A  \}, \nonum \\ 
\Fl&=\Fle+\Flo.
\end{align}
By   the  graded structure of $\Fl$,  
 any element $A\in \Fl$  has its unique even-odd decomposition:
\begin{align}
\label{eq:decompCAREO}
A=\Ap+\Am, \ 
\Ap:=\frac{1}{2}(A+\Theta(A))\in  \Fle,\ 
\Am:=\frac{1}{2}(A-\Theta(A))\in  \Flo.
\end{align}

Next,  we will introduce   the  quantum spin lattice system $\Als$  
 via the fermion lattice system
  $\Fl$  using the $\cstar$-algebraic formalism given in  \cite{AR-XY}.
This  somewhat  roundabout method provides a rigorous  frame  to deal with 
 the Jordan-Wigner transformation 
 connecting   the fermion lattice system to   the quantum 
spin lattice system. 
Let  
\begin{equation}
\label{eq:Thetami}
\Theta_{-}(\ci^{\sharp})=
\begin{cases}
\ci^{\sharp} & i \ge 1 \\
-\ci^{\sharp} & i \le  0
\end{cases}, 
\end{equation}
where $\ci^{\sharp}$ denotes  $\ci$ or $\ci^{\ast}$.
Let $\uniAl$  denote  the $\cstar$-crossed product of $\Fl$
by the $\Z_2$-action $\Theta_{-}\in\AUTf$.
 Concretely,  $\uniAl$ is generated by the algebra  $\Fl$ and the
 additional element $T$ satisfying the following algebraic relations:
\begin{equation}
\label{eq:Th}
T=T^{\ast}, \ T^2=1,\
T AT =\Theta_{-}(A)\ \text{for}\  A\in \Fl.
\end{equation}
The grading $\Theta$ on $\Fl$ 
 is uniquely extended to  $\uniAl$ by determining 
 \begin{equation}
\label{eq:ThetaT}
\Theta(T)=T.
\end{equation}
Obviously, $\uniAl$ is decomposed into the following four  
linear subspaces:
\begin{equation}
\label{eq:uniAldec}
\uniAl=\Fle+T\Fle+\Flo+T\Flo.
\end{equation}
By definition,  
  $\uniAl$ includes both the fermion lattice system $\Fl(=\Fle+\Flo)$ and 
   the quantum spin lattice system $\Als$ 
 defined by
\begin{equation}
\label{eq:Als}
\Als\equiv \Fle+T\Flo.
\end{equation}
  The spin operators 
 $\{\sigma_j^{x}, \sigma_j^{y}, \sigma_j^{z}\in {\mathrm M}_2(\CC) \}$
  at each  $j\in\Z$
 are explicitly given  by 
\begin{align}
\label{eq:spinXYZ}
\sigma_j^{z}:=2\cjcr \cj -1,\quad
\sigma_j^{x}:=TS^{(j)} (\cj+ \cjcr),\quad
\sigma_j^{y}:=iTS^{(j)} (\cj- \cjcr),
\end{align}
 with 
\begin{align}
\label{eq:spin-product}
S^{(j)}:=
\begin{cases}
\sigma_1^{z}\cdots \sigma_{j-1}^{z}   &  j \ge  2 \\
    1  &  j=1 \\
    \sigma_0^{z}\cdots \sigma_{j}^{z} & j \le  0
\end{cases}.
\end{align}
 By  the above definition, using  the CARs \eqref{eq:CAR},  we see that 
  the set 
  $\{\sigma_j^{x}, \sigma_j^{y}, \sigma_j^{z}\in \uniAl; \; j\in\Z \}$ 
 generates  the  quantum spin algebra  $\Als$  on $\Z$
 given in \eqref{eq:Als} as  a subalgebra of $\uniAl$.
Let $\Thes$ denote the   automorphism on the $\cstar$-algebra
 $\Als$  determined  by
 \begin{equation}
\label{eq:spinTheta}
\Thes(\sigma_j^{x})=-\sigma_j^{x}, \ 
\Thes(\sigma_j^{y})=-\sigma_j^{y}, \ 
\Thes(\sigma_j^{z})=\sigma_j^{z}, \quad  i\in \Z.
\end{equation}
 As  $\Thes\circ \Thes=\id \in \AUTs$,  $\Thes$  
provides  the  graded structure with  $\Als$ as
\begin{align}
\label{eq:SPINEO}
 \Alse &:= \{A \in \Als \; \bigl| \;   \Thes(A)=A  \},\quad
 \Also := \{A \in \Als  \; \bigl| \;  \Thes(A)=-A  \}, \nonum\\
\Als&=\Alse+\Also.
\end{align}
We  easily see that  
  $\Alse$ is generated by  
  the set $\{\sigma_j^{z}\in \Als; \; j\in\Z \}$ and even monomials of  
  $\{\sigma_j^{x}, \sigma_j^{y}; \; j\in\Z \}$, and 
 $\Also$ is generated by  
  odd  monomials of  
  $\{\sigma_j^{x}, \sigma_j^{y}\in \Als; \; j\in\Z \}$.
The following identities   give the location of the fermion algebra and the 
 quantum spin algebra in $\uniAl$:
\begin{equation}
\label{eq:Alseven-odd-coin}
\Alse= \Fle, \quad \Also= T\Flo.
\end{equation}
For each $\I\Subset\Z$,
define  the subalgebra  $\AlsI$ of $\Als$
by the  finite-dimensional algebra 
 generated by $\{\sigma_j^{x}, \sigma_j^{y}, \sigma_j^{z} ;\;j\in \I\}$.
Let 
\begin{equation}
\label{eq:spinloc}
\cores:=\bigcup_{\I \Subset \Z }\AlsI.
\end{equation}
This  norm dense $\ast$-subalgebra  in   $\Als$ 
 is   called the local algebra of $\Als$.

\subsection{Discrete and continuous shift translations}
\label{subsec:SHIFT}
We  first  define  the discrete shift translations  on 
 $\Fl$ and then   those on $\Als$.
For each $i\in \Z$, let
\begin{align}
\label{eq:shift-f}
\tauf_{i}(\cj)=\cji, \
\tauf_{i}(\cjcr)=\cjicr, \quad j\in \Z.
\end{align}
The above formulas 
 determine 
 the shift-translation automorphism group $\{\tauf_i\in \AUTf,\;i\in \Z\}$
  on $\Fl$.
For each $i\in \Z$,  let  
\begin{align}
\label{eq:shift-s}
\taus_{i}(\sigma_j^{a})=
 \sigma_{j+i}^{a}, 
\quad a\in \{x,y,z\}, 
\quad j\in \Z.
\end{align}
Then 
$\{\taus_i \in \AUTs,\;i\in \Z\}$
 gives  the  shift-translation automorphism  group  on $\Als$.
Our purpose is to extend these  
$\Z$-shift-translation automorphism  groups to 
strongly continuous 
one-parameter group of automorphisms, i.e. $\cstar$-flows.

In the following,  we   construct a
   continuous  shift-translation group on $\Fl$. 
 First, note that our fermion lattice 
 system $\Fl$  is identical to  the CAR algebra
over  the complex Hilbert space $l^2(\Z)$.
This identification is done 
 by setting   $c_j=c(\chi_j)$ for  $j\in \Z$, where 
 $\chi_j \in l^2(\Z)$ denotes 
 the indicator determined  by  
\begin{align}
\label{eq:indicator}
\chi_j(k):=\delta_{jk},\ (k\in \Z).
\end{align}
For each $f=(f_j)_{j\in \Z}\in l^2(\Z)$, 
 define
\begin{align}
\label{eq:CARf}
c(f):=\sum_{j\in\Z}f_j \cj \in \Fl,
\quad c^{\ast}(f)
:=\sum_{j\in\Z}f_j \cjcr\in \Fl.
\end{align}
 $\Fl$ is  also isomorphic  to the CAR algebra over
 $L^2[-\pi, \pi]$ as $\cstar$-algebras
 by the isometry of  $l^2(\Z)$
 and $L^2[-\pi, \pi]$ as   Hilbert spaces.
 Here, the closed interval   $[-\pi, \pi]$ can  be 
 interpreted  as the range of the possible  momentum  of 
 each  fermion particle.
The  connection between 
$f=(f_j)_{j\in \Z}\in l^2(\Z)$ and $\tilde{f}(k)\in L^2[-\pi, \pi]$
 is given by  the Fourier transformation as follows:
\begin{align}
\label{eq:Fourier}
\tilde{f}(k):=\sum_{j\in\Z}f_j e^{ijk}, \quad k\in [-\pi, \pi],\nonumber \\
\quad f_j=\frac{1}{2\pi}\int_{-\pi}^{\pi}
\tilde{f}(k) e^{-ijk}\,dk  \quad j\in \Z.
\end{align}
Hence for each $j\in \Z$
\begin{align}
\label{eq:chiFourier}
\widetilde{{\chi_j}}(k)= e^{ijk}, \quad k\in [-\pi, \pi].
\end{align}
The one-step (right) shift translation  $U_1$   
 on the  Hilbert space  $l^2(\Z)$
 is given by 
\begin{align}
\label{eq:Ufone}
(U_1 f)_j:=f_{j-1}, \quad j\in \Z,
\quad \text{for}\ f=(f_j)_{j\in \Z}\in l^2(\Z).  
\end{align}
The   $\Z$-group of unitaries 
 $\{U_i\in  \Ul(L^2[-\pi, \pi]);\; i\in \Z\}$ 
is given by  
\begin{equation*}
 U_0=I,\ 
 U_k:=	\underbrace{U_1\cdots U_1}_{k-{\rm{times}}}, 
\  U_{-k}:={U_{k}}^{-1}
 \  (k\in \NN).   
\end{equation*}
By the Fourier transformation,  
$U_1 \in \Ul(l^2(\Z))$ 
  has the following expression  on   $L^2[-\pi, \pi]$ 
\begin{align}
\label{eq:Utilde-one}
\widetilde{U_1 f}(k)=
e^{ik}\tilde{f}(k), \quad
k\in [-\pi, \pi].
\end{align}
By the  
  second quantization procedure, 
 the   $\Z$-group of unitaries 
 $\{U_i\in  \Ul(L^2[-\pi, \pi]);\; i\in \Z\}$ 
   generates  
 the  discrete-shift translation automorphism group 
$\{\tauf_i\in \AUTf,\;i\in \Z\}$ given in \eqref{eq:shift-f}. 
To  interpolate   
  $\{U_i\in  \Ul(L^2[-\pi, \pi]);\; i\in \Z\}$, 
  we  define unitaries   $U_t$ ($t\in\R$)   
 on $L^2[-\pi, \pi]$ as 
\begin{align}
\label{eq:INTERt}
\widetilde{U_t f}(k):=
e^{itk}\tilde{f}(k), \quad k\in [-\pi, \pi]. 
\end{align}
Now we obtain  a   strongly continuous one-parameter
 group of unitaries: 
\begin{equation}
\label{eq:UtPi}
\{U_t\in  \Ul(L^2[-\pi, \pi]);\; t\in \R\}, 
\end{equation}
  equivalently, 
\begin{equation}
\label{eq:UtZ}
\{U_t\in  \Ul(l^2(\Z);\; t\in \R\}.
\end{equation}
By Stone's theorem \cite{Stone},  
there exists a unique  self-adjoint operator 
 $\bar{h}$ on $L^2[-\pi, \pi]$ such that 
\begin{equation}
\label{eq:Utgenerator}
U_t=e^{it \bar{h}}, \quad t\in \R.
\end{equation}
 This  one-parameter
 group of unitaries generates  a  strongly continuous
one parameter  group of quasi-free (Bogoliubov) automorphisms
on $\Fl$: 
\begin{equation}
\label{eq:CARquasifree}
\{\tauf_t \in  \AUTf,\;t\in \R\}.   
\end{equation}
By  construction, it is  a desired $\cstar$-flow 
that  extends  the  discrete 
 shift-automorphism group   $\{\tauf_i\in \AUTf,\;i\in \Z\}$
 from the discrete parameter $\Z$ to the continuous parameter $\R$.
We  see that the continuous shift-translation $\cstar$-flow 
 $\{\tauf_t\in \AUTf,\;t\in \R\}$ given above 
  preserves the  grading $\Theta$: 
\begin{equation}
\label{eq:pres-grading}
\Theta
(\tauf_{t}(A))=
\tauf_{t}(\Theta(A))\quad  \text{for all}\  
A\in \Fl \ \text{for each}\ t\in\R.    
\end{equation}
Hence  it induces 
  a $\cstar$-flow 
$\{\tauf_t\in \AUTeven,\;t\in \R\}$
on the even algebra $\Fle$ by restriction.

The time evolution
$\{\tauf_t \in  \AUTf,\;t\in \R\}$   
 \eqref{eq:CARquasifree} on $\Fl$
is  associated to  the  second quantized 
Hamiltonian $\Hinf=\Gamma(\bar{h})$ of the one-particle Hamiltonian $\bar{h}$. 
 In the following, we shall  
 derive  a  concrete  expression  of  $\Hinf$ 
 by  a formal (non rigorous) method.
 Let  
\begin{align}
\label{eq:hck}
\hck:=\sum_{j\in\Z}\cj e^{ijk}, \quad 
\hckcr:=\sum_{j\in\Z}\cjcr e^{ijk}, \  k\in [-\pi, \pi]
\end{align}
Let 
$\Nk:=\hckcr \hck$, namely  the number operator 
  for the  momentum $k\in [-\pi, \pi]$.
In terms of $\Nk$, $\Hinf$ has the  following 
  expression  in the momentum space:
\begin{align}
\label{eq:H-conti}
\Hinf=\int_{-\pi}^{\pi} k \Nk \,dk.  
\end{align}
 Using 
\begin{align*}
[\Nk, \hck] =-i\hck,\
[\Nk, \hckcr] =i\hckcr
\end{align*}
 we  rewrite $\Hinf$  in the lattice space as  
\begin{align*}
\Hinf=
\int_{-\pi}^{\pi}
\sum_{n\in\Z}\sum_{m\in\Z}
k e^{ik(n-m)} \cncr\cm \,dk
=\sum_{n, m\in\Z}
h_{n,m}\cncr \cm,
\end{align*}
 where  for $n,m\in\Z$
\begin{align*}
h_{n,m}=G(n-m)\in\CC,\ G(x):=\int_{-\pi}^{\pi}k e^{ikx} \, dk.
\end{align*}
By  integration by parts 
\begin{align*}
G(x)
&=\left[k \frac{e^{ikx}}{ix}\right]_{-\pi}^{\pi}
 -\int_{-\pi}^{\pi}
 \frac{e^{ikx}}{ix} \, dk=
\pi \frac{e^{i\pi x}+e^{-i\pi x}}{ix}+
\frac{e^{i\pi x}-e^{-i\pi x}}{x^2}\nonum\\
&=
-2\pi i \frac{\cos (\pi x)}{x}+2i \frac{\sin(\pi x)}{x^2}
\end{align*}
By plugging $n-m$ to $x$, we  obtain 
\begin{align}
\label{eq:Onmfinal}
\Hinf=\sum_{n, m\in\Z}
h_{n,m} \cncr \cm,
\ h_{n,m}=\frac{-2\pi i (-1)^{n-m}}{n-m}.
\end{align}

In fact, the  construction   
 of  continuous  shift translations  described above   
 has been   given  in  
 $\S$2.1  of  \cite{GRO},  $\S$6 of \cite{RWW}, 
  Appendix E of \cite{WIL}, and  $\S$III  of \cite{ZIM}.
We, however, emphasize  that 
   the  second-quantization method  for the fermion system 
  seems not valid for  
 the {\emph{infinite}}   quantum spin lattice system $\Als$.
This assertion will be rigorously validated later in Proposition 
 \ref{prop:no-EXT}.

As the  shift-translation $\cstar$-flow  
 $\{\tauf_t\in \AUTf,\;t\in \R\}$  is a  quasi-free dynamics, 
we can track   its  dynamical behavior  as follows:
\begin{prop}
\label{prop:Quasi-free}
The discrete shift-translation automorphism group
  $\{\tauf_i\in \AUTf,\;i\in \Z\}$
on the one-dimensional fermion lattice system    $\Fl$
 can be extended to a quasi-free  $\cstar$-flow 
$\{\tauf_t\in \AUTf,\;t\in \R\}$ on  $\Fl$.
It  has the exact  dynamical  formulas in terms of 
 the sinc function $\sinc x:=\frac{\sin x}{x}$ as follows.
  For  $j\in \Z$ and $t\in\R \setminus \Z$, 
\begin{align}
\label{eq:CARtimesinc}
\tauf_t(\cj)
&=\sum_{l\in\Z} \sinc( \pi(j+t-l)) \cl 
=\sum_{l\in\Z} \frac{(-1)^{j-l}\sin(\pi t)}{\pi(j+t-l)} \cl
\in \Fl, \nonum \\
\tauf_t(\cjcr)
&=\sum_{l\in\Z} \sinc( \pi(j+t-l)) \clcr 
=\sum_{l\in\Z} \frac{(-1)^{j-l}\sin(\pi t)}{\pi(j+t-l)} \clcr \in \Fl.
\end{align}
For  $j\in \Z$ and $k\in \Z$, 
\begin{align}
\label{eq:CARtimesZ}
\tauf_k(\cj)
=\cjk \in \core,\quad
\tauf_k(\cjcr)
= \cjkcr \in \core.
\end{align}
The time-shift translation $\tauf_t\in \AUTf$
  ($\forall t\in\R \setminus \Z$) 
 violates the locality 
 in the sense of Definition 2.2.1 of \cite{Farr}{\rm{:}}\ 
There exist   a strict local operator 
whose   time-translation  is  not a  local operator.
\end{prop}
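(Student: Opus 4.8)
Everything follows from the identification, already set up in the construction preceding the statement, of $\tauf_t$ as the quasi-free (Bogoliubov) automorphism of $\Fl$ attached to the strongly continuous one-parameter unitary group $\{U_t\}_{t\in\R}$ on $l^2(\Z)\cong L^2[-\pi,\pi]$: this automorphism is the unique one with $\tauf_t\bigl(c(f)\bigr)=c(U_t f)$ for all $f\in l^2(\Z)$. Since $\cj=c(\chi_j)$ and $\tauf_t$ is a $\ast$-automorphism, it suffices to compute the single vector $U_t\chi_j\in l^2(\Z)$: then $\tauf_t(\cj)=c(U_t\chi_j)$ and $\tauf_t(\cjcr)=\tauf_t(\cj)^{\ast}$. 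The existence part of the proposition has already been established, so only the explicit formulas and the locality violation remain.

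\textbf{The formulas.} By \eqref{eq:chiFourier} and \eqref{eq:INTERt} the Fourier transform of $U_t\chi_j$ is $\widetilde{U_t\chi_j}(k)=e^{itk}e^{ijk}=e^{i(j+t)k}$ on $[-\pi,\pi]$, so the inversion formula \eqref{eq:Fourier} gives, for every $l\in\Z$,
\[
(U_t\chi_j)_l=\frac{1}{2\pi}\int_{-\pi}^{\pi}e^{i(j+t-l)k}\,dk=\frac{\sin\bigl(\pi(j+t-l)\bigr)}{\pi(j+t-l)}=\sinc\bigl(\pi(j+t-l)\bigr).
\]
Inserting this into $\tauf_t(\cj)=c(U_t\chi_j)=\sum_{l\in\Z}(U_t\chi_j)_l\,\cl$ yields the first equality in \eqref{eq:CARtimesinc}; the second equality is the elementary identity $\sin\bigl(\pi(j+t-l)\bigr)=(-1)^{j-l}\sin(\pi t)$, valid because $j-l\in\Z$, and the formula for $\tauf_t(\cjcr)$ follows by taking the adjoint, the coefficients being real. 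For $t=k\in\Z$ the coefficients degenerate to $\sinc\bigl(\pi(j+k-l)\bigr)=\delta_{l,\,j+k}$, so the series collapses to the single term $\cjk$, which is \eqref{eq:CARtimesZ}; equivalently $U_k\chi_j=\chi_{j+k}$ straight from \eqref{eq:chiFourier}.

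\textbf{Violation of locality.} Fix $t\in\R\setminus\Z$, so $\sin(\pi t)\neq0$, and take the strict local operator $c_0$, supported on the single site $\{0\}$. By the formula just obtained $\tauf_t(c_0)=c(g)$ with $g=(g_l)_{l\in\Z}$, $g_l=\sinc(\pi(t-l))=\frac{(-1)^{l}\sin(\pi t)}{\pi(t-l)}$; these entries are all nonzero and satisfy $|g_l|\sim\frac{|\sin(\pi t)|}{\pi|l|}$, so $g\in l^2(\Z)\setminus l^1(\Z)$. To translate this decay into a statement about operators, combine the isometry $\|c(f)\|=\|f\|_{l^2}$ for all $f\in l^2(\Z)$ (immediate from \eqref{eq:CAR}, since then $c(f)^{\ast}c(f)+c(f)c(f)^{\ast}=\|f\|^2\,1$ and $c(f)^2=0$, forcing $c(f)^{\ast}c(f)$ to be $\|f\|^2$ times a projection) with the norm-one, trace-preserving conditional expectation $E_R\colon\Fl\to\Fl(\I_R)$ onto the finite-region subalgebra, $\I_R:=\{l\in\Z:|l|\le R\}$, which satisfies $E_R(\cl)=\cl$ for $l\in\I_R$ and $E_R(\cl)=0$ otherwise. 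Since $E_R$ is contractive, $\|A-E_R(A)\|\le 2\,\mathrm{dist}\bigl(A,\Fl(\I_R)\bigr)$ for all $A$, whence
\[
\mathrm{dist}\bigl(\tauf_t(c_0),\Fl(\I_R)\bigr)\ \ge\ \tfrac12\bigl\|\tauf_t(c_0)-E_R\bigl(\tauf_t(c_0)\bigr)\bigr\|\ =\ \tfrac12\Bigl(\sum_{|l|>R}|g_l|^2\Bigr)^{1/2}\ \ge\ \frac{|\sin(\pi t)|}{2\pi}\Bigl(\int_{R+1}^{\infty}\frac{dx}{(x-t)^2}\Bigr)^{1/2},
\]
which is $\ge c(t)\,R^{-1/2}$ for some $c(t)>0$ and all large $R$. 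Thus $\tauf_t(c_0)$ cannot be approximated by operators strictly supported in a box of radius $R$ faster than the rate $R^{-1/2}$ — in particular its symbol $g$ is not even absolutely summable — which is far slower than what Definition 2.2.1 of \cite{Farr} requires of a local operator. Therefore $\tauf_t(c_0)$ is not local, proving the final assertion.

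\textbf{Main obstacle.} The Fourier computation of $U_t\chi_j$ and its degeneration at integer $t$ are routine. The one point that needs care is the last step: one must recall the exact quantitative decay condition in Definition 2.2.1 of \cite{Farr} and check that the $R^{-1/2}$ lower bound exhibited above — equivalently, the failure of $g\in l^1(\Z)$, a fortiori of any faster decay — genuinely contradicts it. The intermediate facts (the isometry $\|c(f)\|=\|f\|_{l^2}$, the action of the trace-preserving conditional expectation on the CAR generators, and the elementary tail estimate for $\sum_{|l|>R}(t-l)^{-2}$) are standard.
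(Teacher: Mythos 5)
Your proposal is correct and follows essentially the same route as the paper: both compute $(U_t\chi_j)_l$ by Fourier inversion to get the sinc coefficients, note the degeneration to $\delta_{l,j+k}$ at integer times, and deduce the locality violation from the infinitely many nonzero coefficients of $\tauf_t(c_0)$. Your quantitative refinement via the trace-preserving conditional expectation (giving the $R^{-1/2}$ lower bound on the distance to $\Fl(\I_R)$) goes beyond the paper's proof, which only observes that $\tauf_t(\cj)\notin\core$, but it is a strengthening rather than a different argument.
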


\begin{proof}
The existence of such   $\cstar$-flow 
$\{\tauf_t\in \AUTf,\;t\in \R\}$ is given in the text, and 
\eqref{eq:CARtimesZ} is obvious. 
As
$\widetilde{U_t \chi_j}(k)=e^{i({j+t})k}$ $(k\in[-\pi, \pi])$
 by \eqref{eq:chiFourier} \eqref{eq:INTERt},  
we have 
\begin{align}
\label{eq:Fourier-shift}
U_t \chi_j (l)=
\frac{1}{2\pi}\int_{-\pi}^{\pi}
  e^{i({j+t})k} e^{-ilk}\,dk  
=\frac{1}{\pi(j+t-l)}\sin(j+t-l)\pi,\ (l\in\Z).
\end{align}
This yields the formula \eqref{eq:CARtimesinc}.
The formula \eqref{eq:CARtimesinc} implies that 
 for any $j\in \Z$, 
neither $\tauf_t(\cj)$ nor 
$\tauf_t(\cjcr)$ belongs to $\core$  for  $t\in \R \setminus \Z$. 
 Actually, for  each $t\in \R \setminus \Z$, 
there are many more  $A\in \core$ such that  $\tauf_t(A)\notin  \core$.
Thus   the locality as stated  is violated
  for  all $t\in \R \setminus \Z$. 
\end{proof}

We shall provide  several   consequences 
 of  Proposition \ref{prop:Quasi-free}.
The first one is related to the following question 
 posed in $\S$ 6.1 of \cite{Farr}:  
Is there a meaningful notion of a ground state
for QCA(quantum cellular automata)?  
A  similar question naturally arises:
How about  equilibrium 
 states for   discrete  automorphism groups?
The following corollary provides  a positive answer to this 
question  
   for the shift-translation automorphism group
  $\{\tauf_i\in \AUTf,\;i\in \Z\}$
on the one-dimensional fermion lattice system    $\Fl$.
(However,  we do not know the answer for 
 $\{\taus_i \in \AUTs,\;i\in \Z\}$.)
\begin{coro}
\label{coro:GROUND}
The continuous shift-translation automorphism 
 group 
$\{\tauf_t\in \AUTf,\;t\in \R\}$ on  $\Fl$
 has 
 a unique KMS state $\vp_{\beta}$ at any  inverse temperature $\beta\in \R$.  
 It has   a unique 
 ground state $\vp_{\infty}$ at $\beta=\infty$ 
and a unique  ceiling state $\vp_{-\infty}$ at $\beta=-\infty$.
\end{coro}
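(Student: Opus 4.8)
The plan is to use the quasi-free structure set up above throughout. By construction $\{\tauf_t\in\AUTf,\;t\in\R\}$ is the Bogoliubov dynamics $\tauf_t(c(f))=c(U_tf)$ attached to the one-particle unitary group $U_t=e^{it\bar{h}}$, with $\bar{h}$ acting as multiplication by $k$ on $L^2[-\pi,\pi]$. Thus $\bar{h}$ is bounded, self-adjoint, has purely Lebesgue absolutely continuous spectrum $[-\pi,\pi]$ --- hence no eigenvalue and trivial kernel --- and, by Riemann--Lebesgue, $U_t\to 0$ weakly as $t\to\pm\infty$, so $\{\tauf_t\}$ is norm-asymptotically abelian on $\Fle$. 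Boundedness of $\bar{h}$ makes the analytic continuations below routine, triviality of $\ker\bar{h}$ will make the ground and ceiling states unique, and absolute continuity will make $\vp_\beta$ a factor state.

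First I would exhibit the states. For $\beta\in\R$ put $T_\beta:=(1+e^{\beta\bar{h}})^{-1}$, a bounded operator with $0<T_\beta<1$, and let $\vp_\beta$ be the gauge-invariant quasi-free state on $\Fl$ whose two-point function is governed by the Fermi--Dirac operator $T_\beta$ (vanishing anomalous part; higher even correlations given by the fermionic Wick/Pfaffian formula). A direct check on the generators $c(f),c^*(f)$ shows $\vp_\beta$ is a $(\tauf,\beta)$-KMS state (standard; cf.\ \cite{BR}). Since $T_\beta\to\chi_{(-\infty,0]}(\bar{h})$ strongly as $\beta\to+\infty$ and $T_\beta\to\chi_{[0,\infty)}(\bar{h})$ strongly as $\beta\to-\infty$, and since strong convergence of symbols forces weak-$\ast$ convergence of gauge-invariant quasi-free states, the weak-$\ast$ limits $\vp_\infty,\vp_{-\infty}$ exist; being limits of $(\tauf,\beta)$-KMS states as $\beta\to\pm\infty$, they are a ground state and a ceiling state of $\{\tauf_t\}$ respectively. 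Concretely $\vp_\infty$ is the Dirac-sea state filling exactly the negative-energy one-particle modes, and in its GNS representation the generator of the implementing unitary group, normalized to annihilate the cyclic vector, is the particle--hole normal-ordered $\Hinf$ and is $\ge 0$.

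The heart of the proof is uniqueness. Let $\omega$ be a $(\tauf,\beta)$-KMS state with $\beta\in\R$. Applying the KMS condition to the pairs $\big(c^*(f),c(g)\big)$ and $\big(c(f),c(g)\big)$, using $\tauf_{i\beta}(c^{\sharp}(f))=c^{\sharp}(e^{-\beta\bar{h}}f)$ (an entire continuation, $\bar{h}$ being bounded) together with the CARs, one gets identities $\omega\big(c^*((1+e^{-\beta\bar{h}})f)\,c(g)\big)=\text{(a scalar fixed by the CARs)}$ and $\omega\big(c((1+e^{-\beta\bar{h}})f)\,c(g)\big)=0$; since $1+e^{-\beta\bar{h}}$ is boundedly invertible, the two-point function of $\omega$ coincides with that of $\vp_\beta$ and the anomalous part vanishes. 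Iterating the same KMS/CAR manipulation on monomials $c^*(f)\,c^{\sharp}(f_1)\cdots c^{\sharp}(f_n)$ expresses every $(n{+}1)$-point function through lower ones, the recursion closing precisely as the fermionic Wick formula, whence $\omega=\vp_\beta$. For $\beta=\pm\infty$: a ground state $\omega$ is $\tauf$-invariant, so its two-point function commutes with $\bar{h}$, and $-i\,\omega(A^*\delta(A))\ge 0$ (tested on $A=c(f)$ and $A=c^*(f)$) forces that two-point function to be the spectral projection $\chi_{(-\infty,0]}(\bar{h})=\lim_{\beta\to+\infty}T_\beta$ --- here triviality of $\ker\bar{h}$ removes the sole ambiguity, the zero-energy eigenspace. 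The vanishing of $\|\pi_\omega(c(f))\Omega_\omega\|$ and $\|\pi_\omega(c^*(f))\Omega_\omega\|$ on the nonnegative, respectively nonpositive, spectral subspace of $\bar{h}$ that this entails then determines all correlation functions via the CARs, so $\omega=\vp_\infty$; the ceiling state is handled dually, replacing $\bar{h}$ by $-\bar{h}$.

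The step I expect to be the main obstacle is closing the Wick recursion in the finite-$\beta$ case, i.e.\ promoting ``the two-point function is determined'' to ``the whole state is determined'': anticommuting $c^{\sharp}(e^{-\beta\bar{h}}f)$ back through a monomial is routine, but the lower-order terms one collects must be organized to reproduce the Pfaffian expansion. A shorter route is to invoke the classification of KMS states of Bogoliubov dynamics --- for a bounded gauge-invariant one-particle Hamiltonian, every KMS state is the gauge-invariant quasi-free state fixed by its two-point function (cf.\ \cite{BR}) --- or, alternatively, to combine (i) the fact that $\vp_\beta$ generates a factor, indeed a type $\mathrm{III}$ factor since $e^{-\beta\bar{h}}$ has full-interval spectrum, hence is an extremal KMS state, with (ii) the general simplex structure of the $(\tauf,\beta)$-KMS states together with the already-established equality of all two-point functions, concluding that this simplex has a single extreme point. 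I would present the self-contained recursion and flag the citation as the economical alternative.
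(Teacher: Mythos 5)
Your proposal is correct and follows essentially the same route as the paper: the paper's proof consists of citing Theorem 1 of Rocca--Sirugue--Testard \cite{ROCCA} for existence and uniqueness of the KMS states and Example 5.3.20 of \cite{BR} for the ground and ceiling states, after observing that $\{U_t\}$ has no nonzero invariant vector in $L^2[-\pi,\pi]$ --- exactly the triviality of $\ker\bar{h}$ that you single out as the decisive hypothesis. What you write is a self-contained unpacking of those two references (Fermi--Dirac symbol, KMS/CAR/Wick recursion, Dirac-sea projection), with the classification of KMS states of Bogoliubov dynamics flagged as the economical shortcut, which is precisely the shortcut the paper takes; only your alternative (ii) (factoriality plus equal two-point functions) would not by itself close the argument without the quasi-freeness input, but you do not rely on it.
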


\begin{proof}
Due to  Theorem 1 of \cite{ROCCA}, 
 there is a unique  KMS state $\vp_{\beta}$
 on $\Fl$ 
  for each $\beta\in\R$ with respect to 
the time evolution  $\{\tauf_t\in \AUTf,\;t\in \R\}$. 
 
There is no invariant vector in  $L^2[-\pi, \pi]$ under the 
 action of  $\{U_t\in  \Ul(L^2[-\pi, \pi]);\; t\in \R\}$.  
  Thus by  the  discussion given in 
 Example 5.3.20 \cite{BR}, we see that    a unique 
  ground state  $\vp_{\infty}$ 
  with respect to  $\{\tauf_t\in \AUTf,\;t\in \R\}$ exists.
 Similarly, there is a  unique  ceiling state $\vp_{-\infty}$
 with respect to  $\{\tauf_t\in \AUTf,\;t\in \R\}$. 
\end{proof}

\begin{remark}
\label{rem:KMS}
For each $\beta\in\R$, the quasi-free KMS state $\vp_{\beta}$ 
 is not a Gibbs state, since 
its one-particle density  operator $\exp({-\beta \bar{h}} )$
 on $L^2[-\pi, \pi]$ is not of 
 trace-class, see Prop. 5.2.23  \cite{BR}. 
 By  Theorem 4.3 \cite{PWKMS} the  KMS state $\vp_{\beta}$ 
induces  a ${\rm{III}}_1$ factor,
 since  the   assumption on  the time evolution  of this theorem is 
 obviously satisfied  by the quasi-free time evolution 
$\{\tauf_t\in \AUTf,\;t\in \R\}$.
\end{remark}

\begin{remark}
\label{rem:Ground-Momentum}
As the  second-quantized Hamiltonian 
$\Hinf=
\int_{-\pi}^{\pi}
k \Nk \,dk$  given in   \eqref{eq:H-conti}
generates  the spatial translations on $\Fl$, it  
  may be naturally identified with the  momentum operator  ``$P$''. 
Although the existence of a ground state  for $P$
 on the infinite homogeneous quantum system $\Fl$
 seems to be  unexpected,   
 it should be noted  
 that the ground state $\vp_{\infty}$ and 
 the ceiling state $\vp_{-\infty}$ 
   correspond  to the  fermi sea 
  occupied by all left-(right-) moving particles, respectively.
Of course the identification $\Hinf=P$ is  formal;
 we refer to $\S$4.5  of \cite{M-time}
 which compares
 Hamiltonians and   momentum operators 
 in terms of  spatial and time crystals.
\end{remark}

\begin{remark}
\label{rem:BACKWARD}
 In addition to the  violation of the locality  described 
in Proposition \ref{prop:Quasi-free}, 
causality, which denotes  the one-directional nature of dynamics,  
  is not perfectly  satisfied by  the continuous extension 
 of  shift translations $\{\tauf_t\in \AUTf,\;t\in \R\}$.
 This is because   backward dynamics 
 with respect to the time parameter emerges, as illustrated in \eqref{eq:CARtimesinc}. The  violation of the  causality is also  observed in   
 the   continuous  shift translations 
 on  the finite-quantum spin chains,   
 as  dipicted in Fig. 1 of  \cite{GRO}.
 On the infinitely extended system $\Fl$, however, 
the  violation of  the causality for any   
 shift-translation $\cstar$-flow 
  becomes  negligible 
in a large time scale, 
as we will see later in   Proposition  \ref{prop:AA-universal}.  
\end{remark}

\begin{remark}
\label{rem:LR} 
The Lieb-Robinson bound  \cite{LR} is a key  assumption  of the  
 work \cite{RWW}, whereas 
 we do not require it.
   Lieb-Robinson bounds  have been shown for    power-law interactions
 with their decay rate $\frac{1}{r^\alpha}$ ($\alpha>d$), where $d$
 is the spatial dimension of the quantum lattice system, 
 see \cite{El} \cite{KUWA}  \cite{MAT} \cite{WIL},  particularly
 \cite{GONG} for  free fermion models on lattices.
 On the other hand, the formal Hamiltonian  $\Hinf$ of  the continuous 
 shift translations on the one-dimensional fermion lattice system
 consists of the  two-body  translation invariant  interactions 
with  $\frac{1}{r}$-decay ($\alpha=d=1$) 
 rate, as  given  in   \eqref{eq:Onmfinal}.
Therefore,   
 the quasi-free fermion  model associated to  the continuous-shift 
translations  has such a long range that 
 it  is outside  the scope of   these previous works.
\end{remark}

\section{Non extendability of  continuous  quasi-free shifts to the quantum spin system}
\label{subsec:NONext}
In the preceding section, we provided a $\cstar$-flow  
  that  extends the 
 discrete shift-translation automorphism group
  on the fermion lattice system $\Fl$. 
The following proposition demonstrates   
 that it is not possible to extend this  
quasi-free   $\cstar$-flow  on $\Fl$
 given in  Proposition  \ref{prop:Quasi-free}  
 to the quantum spin lattice system $\Als$.
\begin{prop}
\label{prop:no-EXT}
There is no 
$\cstar$-flow on $\Als$ which 
 extends the discrete shift-translation automorphism group
  $\{\taus_i\in \AUTs,\;i\in \Z\}$  on $\Als$ and 
coincides with the quasi-free  $\cstar$-flow
  $\{\tauf_t\in \AUTf,\;t\in \R\}$  
on  the even subalgebra  $\Alse=\Fle$.
 \end{prop}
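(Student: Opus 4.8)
The plan is to argue by contradiction. Suppose there exists a $\cstar$-flow $\{\sigma_t \in \AUTs,\;t\in\R\}$ on $\Als$ that restricts to $\{\tauf_t\}$ on $\Alse=\Fle$ and extends the discrete shift $\{\taus_i\}$. The core obstruction is that on $\Als$ the odd part is $T\Flo$, not $\Flo$; so I must track how the putative flow acts on the operator $T$ and on odd spin monomials. First I would localize the contradiction at a single site: consider $\sigma_1^{x}=T(\ctwo^{\sharp}\!-\text{type expression})$, more precisely $\sigma_1^{x}=T(\cj+\cjcr)$ at $j=1$ (since $S^{(1)}=1$). Applying the flow, $\sigma_t(\sigma_1^{x})$ must remain in $\Also=T\Flo$ for every $t$ (the flow commutes with $\Thes$, because it extends the shift which does, and by continuity/uniqueness of the even-odd decomposition). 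So $\sigma_t(\sigma_1^{x}) = T\,b_t$ for a norm-continuous path $b_t\in\Flo$ with $b_0 = \cj+\cjcr$ at $j=1$, and $b_i$ (integer times) equal to the shifted spin $x$-operator's odd fermionic content, which is the long Jordan–Wigner string $S^{(i+1)}(c_{i+1}+c_{i+1}^\ast)$ — an operator that is genuinely nonlocal in a controlled way.

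The key computational step is to use the product $\sigma_1^{x}\sigma_1^{y} = i\,\sigma_1^{z} = i(2\ctwocr\ctwo-1)$ at site $1$, i.e. $\sigma_1^{x}\sigma_1^{y}\in\Fle$, together with analogous relations $\sigma_j^{x}\sigma_{j+1}^{x} \in \Fle$ expressible via two-point fermion operators with a single string factor. Applying the flow: $\sigma_t(\sigma_1^{x})\,\sigma_t(\sigma_1^{y}) = (Tb_t)(Tb'_t) = T b_t T b'_t = \Theta_{-}(b_t)\,b'_t$, and this must equal $\tauf_t(i\sigma_1^z) = \tauf_t(i(2\ctwocr\ctwo-1))$, which is an explicitly computable element of $\Fle$ via the sinc-formula of Proposition \ref{prop:Quasi-free}. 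This gives the functional equation $\Theta_{-}(b_t)\,b'_t = \tauf_t\bigl(i(2c_1^{\ast}c_1-1)\bigr)$ relating the two odd paths. The plan is then to extract from this, for suitable small $t\notin\Z$, a relation forcing $b_t$ to have a prescribed "infinite-string" structure inherited from $\Theta_{-}$; but $b_t$ is a norm limit of local elements and $\Theta_{-}$ acts nontrivially only on the negative half-line, so the spatial spreading of $\tauf_t$ (the $1/(j+t-l)$ tails reaching both $\pm\infty$) is incompatible with $b_t$ being conjugated by $T$ in a way that preserves norm-convergence. Concretely, I would compare the action on the grading $\Theta$ versus $\Theta_{-}$: the fermionic flow $\tauf_t$ preserves $\Theta$ (equation \eqref{eq:pres-grading}) but cannot preserve $\Theta_{-}$ because $\Theta_{-}$ is not shift-invariant, yet any spin-flow extending $\{\taus_i\}$ would, via the identities $\Als=\Fle+T\Flo$ and $T\Flo\ni\Also$, implicitly require compatibility with $\Theta_{-}$ on the fermionic side.

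The main obstacle — and where the real work lies — is making the last incompatibility rigorous rather than heuristic: I need to show that no norm-continuous path $b_t\in\Flo$ with the correct endpoints can satisfy the functional equation. I expect the cleanest route is a half-line/tail argument: choose a state (e.g. the Fock vacuum or a product state) and test the equation $\Theta_{-}(b_t) b'_t = \tauf_t(i(2c_1^\ast c_1 -1))$ on it, using that $\tauf_t(2c_1^\ast c_1-1)$ has nonzero two-point correlations $\langle c_m^\ast c_n\rangle$ spread symmetrically over all $m,n\in\Z$ with $1/(\text{dist})$ weights (from \eqref{eq:CARtimesinc}), whereas $\Theta_{-}(b_t)b'_t$, with $b_t,b'_t$ near-local, produces correlations whose asymmetry under reflection about the origin (induced by $\Theta_{-}$) contradicts the reflection-near-symmetry of $\tauf_t(\,\cdot\,)$ about the point $j=1$ for small $t$. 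Alternatively, and perhaps more robustly, I would invoke that $\tauf_t$ for $t\notin\Z$ does not leave the pair of subalgebras $(\Fl_{(-\infty,0]},\Fl_{[1,\infty)})$ "asymptotically split" in the way $\Theta_{-}$-compatibility would demand, contradicting that the flow is inner-by-$T$ on the odd part. I would then conclude that the assumed $\{\sigma_t\}$ cannot exist. The delicate point throughout is handling norm-convergence of the infinite sums and justifying that the even-odd (and $T$-grading) decompositions pass to these norm limits; those are the steps I would expect to write out carefully.
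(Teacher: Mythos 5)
Your proposal correctly locates the source of the obstruction --- the interplay between the Jordan--Wigner tail $T$ (equivalently $\Theta_{-}$) and the two-sided $1/(j+t-l)$ spreading of $\tauf_t$ --- and your starting move (tracking $\sigma_t(\sigma_1^{x})=Tb_t$ with $b_t\in\Flo$ and multiplying two such odd elements to land in $\Fle$, where the flow is known) is essentially the same device the paper uses. However, there is a genuine gap at exactly the point you flag as ``where the real work lies'': you never produce a rigorous criterion that the functional equation $\Theta_{-}(b_t)\,b_t'=\tauf_t(i\sigma_1^z)$ violates. A single equation in two unknown norm-continuous paths is too weak to contradict directly, and your proposed endgames do not work as stated. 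In particular, the claim that $b_t$ is ``near-local'' is unjustified: \emph{every} element of $\Flo$ is a norm limit of local elements, so nothing about $b_t$ forces approximate localization, and the reflection-symmetry/state-testing heuristic has no quantitative teeth. Likewise, the observation that $\Theta_{-}$ is not shift-invariant cannot by itself be the contradiction, since the \emph{discrete} shift on $\Als$ does exist; for integer $t$ the operator $U_t\theta_{-}U_{-t}\theta_{-}$ differs from the identity by a finite-rank (sign-flip) perturbation, so any viable argument must be quantitative enough to distinguish integer from non-integer $t$.

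The paper supplies the missing quantitative step via the Matsui--Ogata construction and Araki's theory of quasi-free automorphisms. One forms the single unitary $V_t:=\taus_t(B_2T)\,\tauf_t(B_2)\,T\in\Fle$ with $B_2=\ctwocr+\ctwo$ (your $Tb_t$ times a reference odd element, packaged so that the unknown spin flow appears only once), shows that ${\rm{Ad}}(V_t)$ implements the Bogoliubov automorphism of $\omega_t=U_t\theta_{-}U_{-t}\theta_{-}$ on $l^2(\Z)$, and then invokes the theorem that a quasi-free automorphism is inner only if $1-\omega_t$ is Hilbert--Schmidt. This reduces everything to the explicit computation $|(e_{-m},U_te_n)|=|\sin(\pi t)|/(\pi|m+t+n|)$, whose squares summed over the quadrant $m,n\in\NN$ diverge logarithmically for $t\notin\Z$ and vanish for $t\in\Z$ --- precisely the integer/non-integer dichotomy your sketch cannot see. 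To complete your argument you would need to import this Hilbert--Schmidt innerness criterion (or an equivalent implementability obstruction); without it the proof is not salvageable along the lines you describe.
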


\begin{proof}
Let  $\{\taus_t\in \AUTs,\; t\in \R\}$ 
 denote a $\cstar$-flow extension 
 of the discrete shift-translation automorphism group
  $\{\taus_i\in \AUTs,\;i\in \Z\}$
on $\Als$. Suppose that it 
 satisfies the assumption 
\begin{align}
\label{eq:even-coinc}
\taus_t(A)=
\tauf_t(A) \quad \forall A\in  \Alse=\Fle  \quad \forall t\in\R.
\end{align}
In the following, we will prove  non-existence of such
 $\cstar$-flow. 
To this end, we  recall    Lemma 4.2 of  \cite{MatsuiO}, 
 see  also \cite{EVANS-LEWIS}. 
Let $B_2:=\ctwocr+\ctwo$.
For each $t\in \R$,  we define  
\begin{align}
\label{eq:Vt}
V_t:=\taus_t(B_2T )  \tauf_t(B_2)T\in \Fle.
\end{align}
The  inclusion   
$V_t\in \Fle$ as stated above is verified as follows.
By  \eqref{eq:even-coinc} 
 $\taus_t$ and $\tauf_t$ are the same  shift-translation group  
 on  $\Alse=\Fle$,   and hence they automatically
  preserve the grading  $\Thes$ and  $\Theta$, respectively.
  So 
$\taus_t(B_2T )\in \Also$ 
 for $B_2T\in  \Also$,  
and $\tauf_t(B_2)\in \Flo$ for 
 $B_2\in   \Flo$. 
Thus we have $\taus_t(B_2T )\in T\Flo$  and 
 $\tauf_t(B_2)T\in T\Flo$.  
From  the identity $T^2=1$ and the inclusion  
$\Flo\cdot \Flo\subset \Fle$, 
 it follows that  
the product  $\taus_t(B_2T )  \cdot \tauf_t(B_2)T\equiv V_t$ belongs to 
 $\Fle$. 
Let
\begin{align}
\label{eq:thetami}
(\theta_{-} f)_j:=
\begin{cases}
f_j & j \ge 1 \\
-f_j & j \le  0
\end{cases}
\end{align}
 for $f=(f_j)_{j\in \Z}\in l^2(\Z)$.
 The unitary operator $\theta_{-}$ on $l^2(\Z)$
 generates $\Theta_{-}$
 on $\Fl$ given in \eqref{eq:Thetami} by the second-quantization procedure.
For each $t\in \R$, we define 
\begin{align}
\label{eq:wt}
\omega_t:=U_t \theta_{-} U_{-t} \theta_{-} \in
 \Ul(l^2(\Z)), 
\end{align}
 where $U_t\in \Ul(l^2(\Z))$ is given  in \eqref{eq:UtZ}.
In  Lemma 4.2 of  \cite{MatsuiO}  the following 
 relation is given: 
 \begin{align}
\label{eq:AdVt}
{\rm{Ad}}(V_t)=\taus_t \circ \Theta_{-} \circ \taus_{-t}
\circ \Theta_{-} \  \text{on}\  \Fle.
\end{align}
From \eqref{eq:wt} \eqref{eq:AdVt}
 it follows 
that  
 \begin{align}
\label{eq:Vt}
V_t c(f)V_t^{\ast}=c(\omega_t f),\
V_t c^{\ast}(f)V_t^{\ast} =c^{\ast}(\omega_t f).
\end{align}

We now consider 
the   continuous shift operator $U_t$ 
 in the place of  the NESS-dynamics  $u_t$ of the original 
 paper  \cite{MatsuiO}.
Due to  $V_t \in \Fle$, 
each $\beta_t:={\rm{Ad}}(V_t)$ is  an  inner automorphism of $\Fl$. 
Accordingly,   $1-\omega_t$  is  a Hilbert Schmidt operator
 on $l^2(\Z)$, see \cite{ARCAR}. 
 This 
  entails   the following boundedness condition. 
\begin{align}
\label{eq:YUU}
I_t \equiv \sum_{m\in\NN} \sum_{n\in\NN}
\left|
(e_{-m}, U_t e_n)
\right|^2<\infty, \quad t\in \R.
\end{align}

 Next, we will show that
the required boundedness  \eqref{eq:YUU}  is not satisfied.
 For each $t\in \R$, we compute 
\begin{align}
\label{eq:YUUkobetu}
(e_{-m}, U_t e_n)=
\frac{1}{2\pi}\int_{-\pi}^{\pi} e^{-imk} e^{-itk} e^{-ink} \, dk
=\frac{1}{2\pi}
\left[ \frac{e^{i(-m-t-n)k}}{i(-m-t-n)} \right]_{-\pi}^{\pi}
\nonumber \\
=\frac{1}{2\pi i(-m-t-n)}
\left( e^{i\pi (-m-n)}e^{-it\pi}
-e^{i\pi (m+n)}e^{it\pi}
\right)
=\frac{(-1)^{-m-n-1}\sin(\pi t)}{\pi (-m-t-n)}.
\end{align}
Thus, we obtain 
\begin{align}
\label{eq:form}
I_t=
\frac{1}{\pi^2 }
\sum_{m\in\NN} \sum_{n\in\NN} \frac{|\sin(\pi t)|^2}{ |(m+t+n)|^2}.
\end{align}
Hence, unless  $\sin(\pi t)=0$, this  double infinite sum diverges, namely, 
\begin{equation}
\label{eq:mugen}
I_t= \infty \quad \forall t \in \R \setminus \Z.
\end{equation}
 Due to   the inconsistency between
 \eqref{eq:YUU} and \eqref{eq:mugen}, 
   the existence of the inner automorphism 
 ${\rm{Ad}}(V_t)$ is negated. Therefore,  
  the assumed  $V_t\in \Fl$ actually does not exist  
 for all non-integer real numbers; 
 of course,  integers   $t\in\Z$ are  excluded due to  the existence of 
 the discrete shift-translation automorphism group
  $\{\taus_i\in \AUTs,\;i\in \Z\}$.
We  conclude that   
  the  $\cstar$-flow 
$\{\taus_t\in \AUTs,\;t\in \R\}$
satisfying the  assumption  does not exist.
\end{proof}

\section{Asymptotically Abelian condition}
\label{sec:COMMON}
Proposition \ref{prop:Quasi-free} (together with Remark\ref{rem:BACKWARD}) 
demonstrates  that 
 the locality and causality (one-sided direction) 
 satisfied by the discrete 
 shift-translation automorphism group of $\Fl$
  are lost by  its continuous extension. 
We ask   what dynamical properties are shared 
 by both the discrete shift-translation automorphism group 
 and its   continuous extensions. 
We will  show that the asymptotically abelian condition 
 serves as an example of such shared properties. 

The asymptotically abelian condition 
  is  a general  characterization of  quantum dynamics 
 that ensures delocalization  \cite{BR}.
It implies  that every  local operator 
 will escape   from   its original position,  
 and never come back there after some time.  

We  shall recall the  precise  definition 
 of the asymptotically abelian condition \cite{DKK67} in the following.
First, we  give the definition for  the case $\Als$.
  Consider a $\cstar$-flow $\{\alpha_t \in  \AUTs,\;t\in \R\}$ on $\Als$.
We say that it  satisfies   
the  (norm) asymptotically abelian condition, 
 if  the following asymptotic formula  holds:
\begin{equation}
\label{eq:asym}
\lim_{|t| \to \infty} [A, \;  \alpha_t(B)] =0
\quad \text{for every} \  A, B \in \Als, 
 \end{equation}
where the  norm convergence is used.
Similarly,   consider a $\Z$-automorphism group 
$\{\tilde{\alpha}_i \in  \Als,\;i\in \Z\}$. 
It  satisfies   the  (norm) asymptotically abelian condition if
\begin{equation}
\label{eq:asym-i}
\lim_{|i| \to \infty} [A, \;  \tilde{\alpha}_i(B)] =0  
\quad \text{for every} \  A, B \in \Als,  
 \end{equation}
where the  norm convergence is used. 
We then move to  the case $\Fl$,  
 which  has 
 the graded structure \eqref{eq:CAREO} 
 with the   grading  $\Theta$. 
 The   graded commutator  on $\Fl$ is defined  by
 the  following mixture of the commutator and the anti-commutator: 
\begin{align}
\label{eq:gcom}
[\Ap, \;  B]_{\Theta} &:= [\Ap, \;  B] \quad
\text{for}\quad \Ap \in \Ale, \ B \in \Al, \nonum\\ 
[\Am, \; \Bm]_{\Theta} &:= \{\Am,  \; \Bm\} \quad 
\text{for}\quad \Am, \Bm \in \Alo.
\end{align}
A $\cstar$-flow
$\{\alpha_t \in  \AUTf,\;t\in \R\}$ on the fermion system $\Fl$ 
  satisfies   the  $\Theta$-graded  (norm)
 asymptotically abelian condition, if 
\begin{equation}
\label{eq:Gasym}
\lim_{|t| \to \infty} [A, \;  \alpha_t(B)]_{\Theta} =0  
\quad \text{for every} \  A, B \in \Fl,  
 \end{equation}
where the  norm convergence is used.
Similarly,
 $\{\tilde{\alpha}_i \in  \AUTf,\;i\in \Z\}$ 
satisfies  the  (norm) $\Theta$-graded asymptotically abelian condition, if 
\begin{equation}
\label{eq:Gasym-i}
\lim_{|i| \to \infty} [A, \;  \tilde{\alpha}_i(B)]_{\Theta} =0  
\quad \text{for every} \  A, B \in \Fl,
 \end{equation}
where the  norm convergence is used. 
In what follows, we shortly  call   
 the   asymptotically abelian condition 
    `AA-condition'
 and  the  $\Theta$-graded asymptotically abelian condition
   `$\Theta$-AA-condition'.

As a typical example, the discrete shift-translation automorphism group
  $\{\taus_i\in \AUTs,\;i\in \Z\}$
on   $\Als$ satisfies  AA-condition, whereas
the discrete shift-translation automorphism group
  $\{\tauf_i\in \AUTf,\;i\in \Z\}$
on  $\Fl$ satisfies  $\Theta$-AA-condition.
If a $\cstar$-flow 
 $\{\alpha_t \in  \AUTs\ {\rm{or}} \in \AUTf,\;t\in \R\}$ satisfies
 AA  condition or  $\Theta$-AA-condition, 
 then  so does its restricted   
$\Z$-automorphism group 
  $\{\alpha_i \in  \AUTs \ {\rm{or}} \in \AUTf,\;i\in \Z\}$. 
The following proposition  
 demonstrates that   the converse implication holds in general.

\begin{prop}
\label{prop:AA-universal}
Let $\{\alpha_t\in \AUTs,\;t\in \R\}$ be any 
$\cstar$-flow on $\Als$ that extends 
the discrete shift-translation automorphism group
  $\{\taus_i\in \AUTs,\;i\in \Z\}$. 
Then  $\{\alpha_t\in \AUTs,\;t\in \R\}$  satisfies 
 AA-condition.
Let $\{\alpha_t\in \AUTf,\;t\in \R\}$ be any 
$\cstar$-flow on $\Fl$ that extends 
the discrete shift-translation automorphism group
  $\{\tauf_i\in \AUTf,\;i\in \Z\}$. 
Then  $\{\alpha_t\in \AUTf,\;t\in \R\}$  satisfies 
$\Theta$-AA-condition.
\end{prop}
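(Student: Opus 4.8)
The plan is to split each time $t\in\R$ into its integer and fractional parts and then run a delocalization argument. Write $t=n+s$ with $n\in\Z$ and $s\in[0,1)$. Since the flow extends the discrete shift, $\alpha_n=\taus_n$ (resp.\ $\alpha_n=\tauf_n$), hence $\alpha_t=\alpha_n\circ\alpha_s=\taus_n\circ\alpha_s$ (resp.\ $\tauf_n\circ\alpha_s$), so that $\alpha_t(B)=\taus_n(\alpha_s(B))$. As $|t|\to\infty$ we have $|n|\to\infty$; the idea is that $\alpha_s(B)$, though generally non-local, can be approximated in norm by a strictly local element, and the integer shift $\taus_n$ then transports that approximant arbitrarily far away, where it (graded-)commutes with any fixed local operator.

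First I would reduce to the case that $A$ and $B$ are strictly local, say $A\in\Als(\I)$ and $B\in\Als(\J)$ with $\I,\J\Subset\Z$ (resp.\ $A\in\Fl(\I)$, $B\in\Fl(\J)$). This is routine: the (graded) commutator is bilinear and norm-continuous, automorphisms are isometric, and $\cores$ (resp.\ $\core$) is norm-dense, so for local approximants $A',B'$ one has $\|[A,\alpha_t(B)]-[A',\alpha_t(B')]\|\le 2\|A-A'\|\,\|B\|+2\|A'\|\,\|B-B'\|$ uniformly in $t$ (with a harmless extra constant in the graded case, using $\|A_{\pm}\|\le\|A\|$). Hence it suffices to prove $\lim_{|t|\to\infty}[A,\alpha_t(B)]=0$ (and likewise for $[\cdot,\cdot]_{\Theta}$) for such local $A,B$.

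The core of the proof is a uniformity step in the fractional part. By strong continuity of the $\cstar$-flow, $s\mapsto\alpha_s(B)$ is norm-continuous on $[0,1]$, so the set $\{\alpha_s(B):s\in[0,1]\}$ is norm-compact, hence totally bounded. Fix $\varepsilon>0$; cover this set by finitely many $\varepsilon$-balls centred at $\alpha_{s_1}(B),\dots,\alpha_{s_m}(B)$, and by density pick strictly local $C_j\in\Als(\K_j)$ (resp.\ $\Fl(\K_j)$) with $\|\alpha_{s_j}(B)-C_j\|<\varepsilon$. Put $\K^{*}:=\bigcup_{j=1}^{m}\K_j\Subset\Z$, so that for every $s\in[0,1)$ there is an index $j$ with $\|\alpha_s(B)-C_j\|<2\varepsilon$ and $C_j\in\Als(\K^{*})$ (resp.\ $\Fl(\K^{*})$). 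Now choose $N$ so large that $(\K^{*}+n)\cap\I=\emptyset$ whenever $|n|\ge N$, which is possible since $\I$ and $\K^{*}$ are finite. For $|t|$ large enough that the integer part $n$ of $t$ satisfies $|n|\ge N$, write $\alpha_t(B)=\taus_n(C_j)+\taus_n(\alpha_s(B)-C_j)$ with $j$ as above; the first summand lies in $\Als(\K^{*}+n)$, supported away from $\I$, so $[A,\taus_n(C_j)]=0$ (resp.\ $[A,\tauf_n(C_j)]_{\Theta}=0$ by the graded commutation of CAR elements localized in disjoint regions). Therefore $\|[A,\alpha_t(B)]\|=\|[A,\taus_n(\alpha_s(B)-C_j)]\|\le 4\|A\|\,\varepsilon$ (resp.\ a fixed multiple of $\|A\|\,\varepsilon$), and since $\varepsilon$ was arbitrary the AA-condition (resp.\ $\Theta$-AA-condition) follows for local $A,B$, hence for all $A,B$ by the first step.

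The step I expect to be the genuine obstacle is exactly this uniformity over the fractional part $s\in[0,1)$: because $\alpha_s(B)$ really varies with $s$ and is not local, a single local approximant does not suffice, and the argument only closes once the pointwise approximation is upgraded --- via compactness of the image of $[0,1]$ under the strongly continuous flow --- to a finite approximation valid throughout the whole parameter window, so that the threshold $N$ can be chosen independently of $s$. Everything else is the standard delocalization mechanism together with the disjoint-region (graded) commutativity built into the quasi-local structures of $\Als$ and $\Fl$.
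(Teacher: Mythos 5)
Your proposal is correct and follows essentially the same route as the paper's proof: decompose $t$ into integer and fractional parts, use norm continuity of $s\mapsto\alpha_s(B)$ on $[0,1]$ together with a finite $\varepsilon$-net to obtain a strictly local approximation of $\alpha_s(B)$ that is uniform in $s$, and then let the integer shift carry the approximant into a region disjoint from the support of $A$, where the (graded) commutator vanishes. The only differences are cosmetic --- you phrase the uniformity step via total boundedness of the compact orbit rather than an explicit partition $\{B(k/p)\}$, and you spell out the reduction to local $A,B$ and the $|t|\to-\infty$ direction more explicitly than the paper does.
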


\begin{proof}
We consider the case  $\Als$. 
We  introduce some  convenient notation.
 Each  $t\in \R_{+}$ 
 is  uniquely written  as the  sum of 
 its non-negative integer part $\tint \in \NN \cup{0}$ 
and its decimal part $\tdec\in [0, 1)$.
 For example,  we have $\tint=3$ and $\tdec=0.14159$ for  $t=3.14159$.  
 Let $\Ks$ be some  subset of $\Als$.
 Let $\varepsilon>0$ and $A\in \Als$.
 If $\displaystyle{\inf_{X\in \Ks}\Vert A-X \Vert<\varepsilon}$, then we  write  $A \ineps  \Ks$.
 For  $\I\Subset\Z$, its translation by $x\in\Z$
 is denoted as $\I+x:=\{y+x\in \Z; \; y\in \I\}$.
 For  $N\in \NN\cup\{0\}$ define the 
 following discrete segment centered at  $0\in\Z$ 
\begin{equation*}
[N]\equiv \{-N, -N+1, \cdots, -1, 0, 1, 
\cdots, N-1, N\}
\end{equation*}
We  denote 
\begin{equation*}
\AlsN\equiv 
\Als\left([N]\right). 
\end{equation*}
For $x\in \Z$,  
let 
\begin{equation*}
\AlsNx\equiv
\taus_x( \AlsN)
\end{equation*}

 To show the   formula 
 \eqref{eq:asym} it is  enough
 to derive   
\begin{equation}
\label{eq:asymII}
\lim_{t \to +\infty} [A, \;  \alpha_t(B)] =0
\quad \text{for each pair of} \ A, B \in \cores. 
 \end{equation}
For any given   $A, B \in \cores$, 
 there is a minimum  $N\in \NN$
 such that  $A, B \in \AlsN$.
Let  $B(t):=\alpha_t(B)\in \Als$, which is  
an $\Als$-valued norm continuous function for  $t\in \R$.
 We consider $B(t)$ on the closed segment  
$[0, 1]\equiv \{t\in\R ; \;0\le t \le 1 \}$.
For $p\in\NN$,  take the following 
 finite subset of $\Als$:
\begin{equation}
\Bl_{p}:= \{B(0), B(1/p), B(2/p),  \cdots, B(k/p),  \cdots,  B(1-1/p),
 B(1)\}, 
\quad (0\le  k \le p,\ k\in \NN).
\end{equation}
For any fixed $\varepsilon>0$, 
 there exists a sufficiently large $p \in \NN$ such that 
\begin{equation}
\label{eq:Bt-eps}
B(t)\ineps\Bl_p
 \quad  \text{for every } \ t\in [0, 1],
\end{equation}
 Since 
$\Bl_p$ is a finite subset of 
$\Als$, there exists a sufficiently large 
$M(>N)\in \NN$ such that 
\begin{equation}
\label{eq:Bseg-eps}
\Bl_{p} \ineps \AlsM.
\end{equation}
From \eqref{eq:Bt-eps} \eqref{eq:Bseg-eps}
we have 
\begin{equation}
\label{eq:Bt-Am}
B(t) \inDoubeps \AlsM
\quad  \text{for every } \ t\in [0, 1].
\end{equation}

We will crudely estimate  
  the location  of $\alpha_t(B)$ for large $t\in\R$.
By definition,  we have 
\begin{equation}
\label{eq:Bt-bunkai}
B(t)=B(\tint+\tdec)=\alpha_{\tint}\left(B(\tdec)\right)
=\taus_{\tint}\left(B(\tdec)\right)
\quad \text{for each } \ t(=\tint+\tdec) \in \R.
\end{equation}
From \eqref{eq:Bt-Am}
and  \eqref{eq:Bt-bunkai}
 it follows that 
\begin{equation}
\label{eq:Bt-IDOU}
B(t) \inDoubeps \AlsidouM
\quad  \text{for each} \ t\in \R.
\end{equation}
 Hence,  the intersection 
$[N]\cap [M]+\tint$ will  eventually become empty 
 as  $t\to+\infty$ ($\tint \to+\infty$).
Thus,  due to  the local commutativity  of $\Als$,  
 we have shown  \eqref{eq:asymII}.

The proof for \eqref{eq:asymII} provided  above  
 demonstrates   that 
 the location  of $\alpha_t(B)$ is near to the shift-translation 
 $\taus_{\tint}(B)$ 
 and becomes  far away 
from the original  location of the local operator $B$ 
as the time  $t\in\R$ proceeds. 
 This characteristic also holds  true in the case of $\Fl$.
Thus,  we can show    
\begin{equation}
\label{eq:GasymII}
\lim_{t \to \infty} [A, \;  \alpha_t(B)]_{\Theta} =0  
\quad \text{for each pair of} \  A, B \in \core,   
 \end{equation}
repeating  the same argument for \eqref{eq:asymII}.
 We need  not  consider 
 distinctions arising from the graded structure of the fermion system $\Fl$.
\end{proof}

\begin{remark}
\label{rem:ROH}
This is a side remark. 
The  Rohlin property \cite{KIS} is
 another   property which becomes invalid 
 by the continuous extension of discrete shift translations.
The shift-translation automorphism group 
  $\{\tauf_i\in \AUTf,\;i\in \Z\}$ is known to have 
the  Rohlin property  \cite{BKRS}, whereas 
its  quasi-free extension  
$\{\tauf_t\in \AUTf,\;t\in \R\}$ fails to satisfy it  because 
 of the existence of  KMS states Corollary \ref{coro:GROUND}. 
\end{remark}

\section{Discussion}
\label{sec:DIS}
 We specified   
the formal Hamiltonian $\Hinf$
for  the  shift-translation quasi-free $\cstar$-flow   
 $\{\tauf_t\in \AUTf,\;t\in \R\}$ given in Proposition  \ref{prop:Quasi-free}. 
 It consists of the  two-body  translation invariant  interactions 
with  $\frac{1}{r}$-decay.  Let us   recapitulate its  formula  given 
in \eqref{eq:Onmfinal}: $\Hinf
=\sum_{n, m\in\Z}\frac{-2\pi i (-1)^{n-m}}{n-m} \cncr \cm$.  
 
We note that  its   derivation  is not rigorous;   
 the   meaning  of $\Hinf$ 
  in  $\cstar$-theory \cite{BR} remains  unclear.
More precisely, we do not know whether 
 and how it is associated to  a pre-generator 
of the  $\cstar$-flow   
 $\{\tauf_t\in \AUTf,\;t\in \R\}$.
Although 
the quasi-free  $\cstar$-flow $\{\tauf_t\in \AUTf,\;t\in \R\}$
 has   KMS  states  
as shown in Corollary \ref{coro:GROUND},
the one-site energy, which is  required for  
 the description of equilibrium states (attaining the minimum free-energy),  
 seems   not obvious.
At least, its existence  does not follow from 
 the  general theory  of   fermion lattice systems 
\cite{EQ} \cite{BR}.
We conjecture that  the   alternative $\pm$ sign  
of each creation-annihilation pair interaction 
 may  provide  a clue for the   cancellation of long-range effects,  
  yielding   
 a well-defined  $\cstar$-flow 
on  the infinite  fermion lattice system. 
(There are  some long-range  quantum lattice models 
 which do not have   $\cstar$-flows, see \cite{THIRBCS} \cite{BRU2}.)

The  above quasi-free  $\cstar$-flow on $\Fl$ 
  is  a special  example of  continuous extensions 
 of discrete shift translations.
 Proposition  \ref{prop:no-EXT}   merely excludes 
 one specific extension method from $\Fl$ to $\Als$; it is not a 
 thorough no-go statement.  There may be   other   
  continuous  extensions  of discrete shift translations
  on $\Fl$ and  $\Als$.
We shall raise  the following questions.

\medskip

\noindent{\bf{Question A}}:\  
Is  there  
another  (non-quasi-free) $\cstar$-flow  on $\Fl$ that 
 extends  the  discrete  shift translations?

\medskip

\noindent{\bf{Question B}}:\  
Is there  
a $\cstar$-flow on $\Als$ that 
 extends  the  discrete  shift translations?

\medskip

\noindent{\bf{Question C}}:\  
Is there  
a $\cstar$-flow on $\Als$ satisfying AA condition?
\medskip

If  a positive  example  for {\bf{Question A}}
exists on $\Fl$, then we will argue 
 whether   its  extension   
  to $\Als$  is possible or not.
  Of course, by   Proposition  \ref{prop:AA-universal},  
if a positive example of {\bf{Question B}} is found, then 
it gives a positive  example for {\bf{Question C}} as well.
 
\section*{Acknowledgments}
We would like to    thank 
Jean-Bernard Bru, Toshihiko Masuda and Itaru Sasaki 
 for  correspondence.
This work was financially supported by   Kakenhi (grant no.
21K03290).

\end{document}